%% file: main.tex
\documentclass[%
reprint,
superscriptaddress,
prd,
 amsmath,amssymb,
 aps,
nofootinbib
]{revtex4-2}

\usepackage{graphicx}
\usepackage{svg}
\usepackage{amsthm}
\usepackage{xcolor}
\usepackage{physics}
\usepackage{siunitx}
\usepackage{caption}
\usepackage{subcaption}
\usepackage{hyperref}

\usepackage{cleveref}
\crefname{equation}{Eq.}{Eqs.}
\crefname{figure}{Fig.}{Figs.}
\crefname{table}{Tab.}{Tabs.}
\crefname{section}{Sec.}{Secs.}
\crefname{appendix}{App.}{Apps.}
\crefname{proposition}{Prop.}{Props.}
\Crefname{table}{Table}{Tables}
\Crefname{figure}{Figure}{Figures}

\definecolor{red}{rgb}{1.0, 0, 0}

\newcommand{\R}{\mathbb{R}}
\newcommand{\Z}{\mathbb{Z}}
\newcommand{\obs}{\mathcal{O}}

\DeclareMathOperator{\Var}{Var}
\newcommand{\pdf}{\text{PDF}}

\newtheorem{proposition}{Proposition}[section]

\usepackage{standalone}
\usepackage{tikz}
\usepackage{tikz-3dplot}
\usetikzlibrary{arrows.meta}
\usepackage[dvipsnames]{xcolor}

\begin{document}

\title{Computing the Critical Temperature of the Affine-Transformed $D=3$ Ising Model Using Masked Autoregressive Flow}

\author{Kai Svenson}
\affiliation{Theory Group, Department of Physics, The University of Texas at Austin, Austin, Texas 78712, USA}
\author{George T.\ Fleming}
\affiliation{Fermi National Accelerator Laboratory, Batavia, IL 60510, USA}
\author{Richard C.\ Brower}
\author{Nobuyuki Matsumoto}
\author{Rohan Misra}
\affiliation{Department of Physics, Boston University, Boston, MA 02215, USA}

\begin{abstract}
    The simple Ising model provides a rich environment to build and study lattice field theories. As part of an ongoing project to construct a conformal field theory (CFT) on an arbitrarily curved manifold~\cite{brower_intro,Brower_2021,ayyar2023operatorproductexpansionradial}, in this work we develop methods to measure the critical temperature $\beta_c$ of the affine-transformed Ising model on the face-centered cubic (FCC) lattice. The main challenge in this endeavor is finding a computationally efficient and accurate method of interpolating and extrapolating Monte Carlo observables with respect to coupling coefficients and temperature. Herein, we compare two such methods. A traditional statistical approach uses the multiple histogram (MH) method, while a newer machine learning approach uses a masked autoregressive flow (MAF) to estimate the underlying probability density function of a set of observables. While the MH method is specifically designed to interpolate and extrapolate Monte Carlo observables, we find that MAF is a viable alternative for measuring $\beta_c$ with a computational cost that scales more favorably. Furthermore, we comment on additional advantages of MAF relevant to our work, such as extrapolating in system volume.
\end{abstract}

\maketitle

\section{Introduction}

One of the great unsolved problems in physics emerging out of the twentieth century is the formulation of a self-consistent theoretical framework for harmonizing general relativity with the Standard Model of particle physics, which should describe the interaction between dynamically fluctuating curved spacetime manifolds and fluctuating quantum fields that live on those manifolds.  One theoretical obstruction to constructing such a framework is how to properly formulate a general non-perturbative quantum field theory on a static curved manifold.  The limit of weakly-coupled quantum field theories on nearly-flat manifolds has been fairly well understood for decades \cite{Jack:1983sk}. There has also been very exciting progress in the path integral formulation of the theory of quantum gravity (see, for example, Refs.~\cite{Dai:2023tud, Dai:2024vjc}) without quantum fields on the manifold.

Some progress has been made in the past decade or so extending beyond the perturbative regime by focusing on conformal field theories (CFT) in radial quantization, where the static $D$-dimensional spacetime manifold is $\mathbb{R} \times \mathbb{S}^{D-1}$.
The three-dimensional Ising CFT serves as a key benchmark in this program. Its scaling dimensions and OPE coefficients are now known to high precision from the conformal bootstrap~\cite{ElShowk2014SolvingT3, Kos2014BootstrappingMC, Kos:2016ysd}, providing exact targets against which any lattice construction can be compared. Two complementary lattice approaches aim to realize this CFT directly in radial quantization on $\mathbb{R}\times\mathbb{S}^2$: the fuzzy sphere regularization, which discretizes the sphere using electrons in the lowest Landau level and has reproduced the state-operator correspondence and OPE data of the transition~\cite{Zhu2022UncoveringCS, Hu2023OperatorPE}; and the \textit{quantum finite elements} (QFE) program, which uses a simplicial Regge triangulation of the sphere~\cite{Brower_2021, ayyar2023operatorproductexpansionradial}.

A central lesson of the QFE construction of the Ising model on $\mathbb{S}^2$~\cite{brower2024isingmodelmathbbs2} is that reaching the correct CFT in the continuum limit requires a precise map between the lattice couplings and the target geometry. In two dimensions this map was found analytically~\cite{Brower_2023_2d_ising}: restoring rotational symmetry of the critical two-point function corresponds to an affine transformation taking circles to ellipses. The \textit{affine conjecture}~\cite{Brower:2025oti} posits that this generalizes to higher dimensions and arbitrarily curved manifolds — that there exists a smooth map between the affine lattice couplings and the target geometry such that, tuned to criticality, the lattice theory recovers the desired CFT in the continuum. Testing this conjecture requires knowing the critical temperature $\beta_c$ as a function of the couplings $\{K_i\}$, precisely the quantity we set out to measure in this work.

While QFE is a perturbative method, in this work we make progress toward a generalization of QFE which will apply non-perturbatively to a wider class of manifolds. As with QFE, we are interested in the construction of a $\phi^4$ theory on an arbitrarily curved manifold at the Wilson-Fisher fixed point. Locally, this theory may be seen as the continuum limit of the critical Ising model in $\R^3$ under an affine transformation~\cite{ayyar2023operatorproductexpansionradial}. Since this theory is conformal at its critical temperature, it is entirely characterized by a set of well-known scaling dimensions and operator product expansion (OPE) coefficients~\cite{ayyar2023operatorproductexpansionradial}. As a result, by choosing to study conformal field theories (CFTs), we have a convenient method of checking that our construction preserves desired symmetries by comparing its scaling dimensions and OPE coefficients to their known values.

At this stage in the project, we are focused on constructing the action of the affine-transformed Ising model:
\begin{align} \label{eq:ising_action}
    S &= -\beta \sum_{\vb{x}} \sum_{i = 1}^{z/2} K_{i} s_{\vb{x}} s_{\vb{x} + \vu{e}_i}
\end{align}
$\sum_{\vb{x}}$ runs over all lattice sites, and $i$ is an index that runs over unique links to nearest neighbors (we sum up to half the coordination number $z$ to prevent double-counting). See \cref{fig:fcc} for a visual. The $s_{\vb{x}} \in \{-1, 1\}$ constitute the spin degrees of freedom and $\vu{e}_i$ is defined such that $\vb{x}$ and $\vb{x} + \vu{e}_i$ form a nearest-neighbor pair.

The affine transformation is realized through the set of tunable dimensionless coupling coefficients $\{K_i\}$, which are a function of the curvature of a given manifold. Since our overarching objective is a construction that applies to an arbitrarily curved manifold, it remains to find the critical temperature $\beta_c$ as a function of $\{K_i\}$, which we evaluate numerically. For a fixed set of $\{K_i\}$, we use Monte Carlo methods to measure the energy and magnetization variances as functions of $\beta$. As the energy variance is related to the specific heat, and the magnetization variance to the susceptibility, both are expected to peak sharply (diverge in the thermodynamic limit) at the critical temperature, which is how we measure $\beta_c$.

However, Monte Carlo simulations of large-volume Ising models are computationally very expensive. Hence, running iterated simulations to increase the precision on $\beta_c$ is simply impractical. The alternative then is to find a method of interpolating $\beta_c$ from a set of Monte Carlo measurements taken in the neighborhood of $\beta_c$. In this work, we compare two methods. The first is the \textit{multiple histogram} (MH) method~\cite{newman}, which reweights samples of Monte Carlo observables taken at various temperatures to yield a new effective sample taken at a unique temperature. The second method uses \textit{Masked Autoregressive Flow} (MAF)~\cite{papamakarios2018maskedautoregressiveflowdensity} to model the underlying probability density function (PDF) of the Monte Carlo observables. The focus of this work will be comparing the performance of these methods, their strengths, and weaknesses. First, we discuss some of the preliminary steps that are necessary to complete this analysis.

\section{Monte Carlo Methods}

Using standard Monte Carlo methods~\cite{newman} and the computational resources offered by the Fermilab Lattice QCD Facility, we are able to measure thermal averages and fluctuations of observables governed by the Ising action in \cref{eq:ising_action}. All of the results shared in this work are done on the face-centered-cubic (FCC) lattice with $V = 32 \times 32 \times 32 = 32768$ spins using the Swendsen-Wang algorithm~\cite{sw_algorithm} and periodic boundary conditions (see \cref{fig:fcc}). In particular, we are interested in the fluctuations of the energy along each lattice edge and the magnetization, since they each peak sharply at criticality, and hence provide separate estimates of $\beta_c$. The spread in these measurements provides a useful reference for finite-size effects since all measurements of $\beta_c$ should agree in the infinite-volume limit. To make these terms more precise, the energy $E_i$ associated with each lattice edge $i$ is defined as:
\begin{align}
    E_i &= -\sum_{\vb{x}} s_{\vb{x}} s_{\vb{x} + \vu{e}_i}
\end{align}
At times, it will be notationally convenient to assemble the $\{E_i\}$ and $\{K_i\}$ into $z/2$-component vectors: $\vec{E}$ and $\vec{K}$. For the FCC lattice, $z/2 = 6$. The total energy is then $U = \vec{K}\cdot\vec{E}$. The energy variance along edge $i$ is defined as $\Var(E_i) = \expval{E_i^2} - \expval{E_i}^2$. Unless explicitly stated otherwise, angled brackets $\expval{\cdot}$ and $\Var(\cdot)$ denote thermal averages and variances. The magnetization takes the standard definition:
\begin{align}
    m &= \sum_{\vb{x}} s_{\vb{x}}
\end{align}

\begin{figure}
    \centering
    \includestandalone[width=\columnwidth]{fcc}
    \caption{FCC lattice unit cell with labeled coupling coefficients. Spins are located on the black nodes. The coordination number of the lattice is $z=12$. Some bonds are suppressed for clarity.}
    \label{fig:fcc}
\end{figure}

The simplest case we can consider is measuring $\Var(E_1/V)$, $\dots$, $\Var(E_6/V)$, and $\Var(|m|/V)$ as functions of $\beta$ and $K_i$, while fixing all other $K_{j \neq i} = 1$. The preliminary results are shown in \cref{fig:prelim_results}, where we choose $i=6$. Each plot demonstrates the expected behavior, clearly distinguishing the ferromagnetic phase in the upper right and the paramagnetic phase in the lower left. As predicted, \cref{fig:prelim_a,fig:prelim_b,fig:prelim_c} each display a sharp peak in $\Var(E_5/V)$, $\Var(E_6/V)$, and $\Var(|m|/V)$, respectively, which we can fit to measure $\beta_c$ (as described in \cref{sec:fit_beta_c}). Furthermore, for $K_6=1$, the location of the peak is consistent with $\beta_c = \SI{0.1020707(2)}{}$ calculated in previous work~\cite{betac_YU201575}.
\begin{figure*}
    \centering
    \begin{subfigure}[t]{0.32\textwidth}
        \centering
        \includesvg[width=\textwidth]{figs_150824_sw_coarse/k5_energy_var.svg}
        \caption{$\widehat{\Var}(E_5/V)$}
        \label{fig:prelim_a}
    \end{subfigure}
    \hfill
    \begin{subfigure}[t]{0.32\textwidth}
        \centering
        \includesvg[width=\textwidth]{figs_150824_sw_coarse/k6_energy_var.svg}
        \caption{$\widehat{\Var}(E_6/V)$}
        \label{fig:prelim_b}
    \end{subfigure}
    \hfill
    \begin{subfigure}[t]{0.32\textwidth}
        \centering
        \includesvg[width=\textwidth]{figs_150824_sw_coarse/magnetization_var.svg}
        \caption{$\widehat{\Var}(|m|/V)$}
        \label{fig:prelim_c}
    \end{subfigure}
    \caption{Various observables of interest plotted against $\beta$ and $K_6$. $V = 32^3$ is the total number of spins. All other coupling coefficients are fixed to one: $K_{i\neq 6} = 1$. Each pixel is a sample variance taken from 1000 data points.}
    \label{fig:prelim_results}
\end{figure*}

However, for the purposes of maintaining conformal symmetry in the continuum limit, we require a more precise measurement of $\beta_c$ than what can be achieved with \cref{fig:prelim_results}. Simply performing a greater number of Monte Carlo simulations is computationally intractable, which leads us to the focus of our work: comparing methods of Monte Carlo interpolation and extrapolation.

\section{The Multiple Histogram Method}

The multiple histogram (MH) method~\cite{newman} is a general-purpose statistical method of interpolating and extrapolating Monte Carlo observables given an ensemble of random samples taken over a range of action parameters. The MH method can be immediately applied toward our goal of increasing the resolution of \cref{fig:prelim_results} in $\beta$ and all the $K_i$, enabling a more precise estimate of $\beta_c$. Notably, we cannot use the MH method to extrapolate observables in system volume. Such a capability would be highly desirable when taking the infinite-volume limit of the Ising action, which motivates our search for alternative interpolation/extrapolation methods.

The details of the MH method can be found in~\cite{newman}, but to provide enough context for this analysis, suppose we have an observable $\obs$, and we would like to calculate its thermal average $\expval*{\obs(\beta, \vec{K})}$. To do this, we perform $m$ Monte Carlo simulations, each with a different set of parameters $\{\beta_j, \vec{K}_j\}$ with $j \in \qty{1, \dots, m}$. In the $j$th simulation, we take $n_j$ samples of $\obs$ and $\vec{E}$, which we label $\qty{\obs_{j,s}}$ and $\{\vec{E}_{j, s}\}$ with  $s \in \qty{1, \dots, n_j}$. At its core, the multiple histogram method uses an iterative self-consistency process to estimate the partition function $\widehat{Z}(\beta, \vec{K})$. We use hats to denote variables that are estimates as opposed to exact. With $\widehat{Z}$, the samples $\qty{\obs_{j,s}}$ can be reweighted to obtain an estimate of $\expval*{\obs(\beta', \vec{K}')}$ at an unsimulated parameter set $\{\beta', \vec{K}'\}$. Let:
\begin{subequations}\label{eq:reweighting_factors}
    \begin{align}
        \widehat{p}_{j, s} &= \frac{\exp(-\beta' \vec{K}' \cdot \vec{E}_{j,s})}{\widehat{Z}(\beta', \vec{K}')} \label{eq:pjs}
        \\
        \widehat{q}_{j, s} &= \sum_{k=1}^m \frac{n_k}{N} \cdot \frac{\exp(-\beta_k \vec{K}_k \cdot \vec{E}_{j,s})}{\widehat{Z}(\beta_k, \vec{K}_k)}\label{eq:qjs}
    \end{align}
\end{subequations}
with $N = \sum_{j=1}^m n_j$. Then, our estimate of $\expval*{\obs(\beta', \vec{K}')}$ is:
\begin{align}\label{eq:mh_est}
    \widehat{\obs} &= \frac{1}{N}\sum_{j=1}^m\sum_{s=1}^{n_j} \obs_{j,s} \frac{\widehat{p}_{j, s}}{\widehat{q}_{j, s}}
\end{align}

An estimator for $\Var(\obs(\beta', \vec{K}'))$ is:
\begin{align}\label{eq:mh_var_est}
    \widehat{\Var}\qty(\obs) &=
    \frac{N}{N-1}\qty(
    \widehat{\obs^2} - (\widehat{\obs})^2
    )
\end{align}

Justification for these estimators and intuition for our choice of notation are given in \cref{sec:mh_estimators}. For reliable estimates, a conservative bound for the temperature difference $\Delta \beta$ between consecutive simulations in the vicinity of $\beta$ is given by~\cite{newman}:
\begin{align}\label{eq:delta_beta_req}
    \qty(\frac{\Delta \beta}{2})^2 < \frac{1}{\Var(U)}
\end{align}
Across all simulations, $\Delta \beta = \SI{5e-4}{}$, and the maximum recorded estimate for $\Var(U)$ is $\sim \SI{5.2e6}{}$. As intended, our choice of $\Delta \beta$ comfortably satisfies \cref{eq:delta_beta_req} by roughly a factor of 3. This inequality ensures that the energy distributions of consecutive simulations sufficiently overlap with each other, and prevents the reweighting factors $\widehat{p}_{j, s} / \widehat{q}_{j, s}$ from becoming exponentially suppressed. As a double-check, we verify this graphically in \cref{fig:mh_reliable}. \cref{fig:eng_hist} plots the empirical energy density distribution for the $K_6=1$ slice. The continuous support between the lowest and highest recorded energy densities enables the MH method to comfortably interpolate any observable within the corresponding temperature range. Additionally, \cref{fig:mh_ess} plots the effective sample size (ESS), defined as follows. First, let $w_{j, s} = \frac{\widehat{p}_{j, s}}{\widehat{q}_{j, s}}$. Then:
\begin{align}
    \text{ESS} &= \frac{\qty(\sum_{j=1}^m\sum_{s=1}^{n_j} w_{j, s})^2}{\sum_{j=1}^m\sum_{s=1}^{n_j} w_{j, s}^2}
\end{align}
We acknowledge the caveats of formally using such a definition for the ESS. However, for our purposes, it serves as a heuristic confirmation that the weights in our parameter space of interest are non-vanishing.
\begin{figure*}
    \centering
    \begin{subfigure}[t]{0.49\textwidth}
        \includegraphics[width=\textwidth]{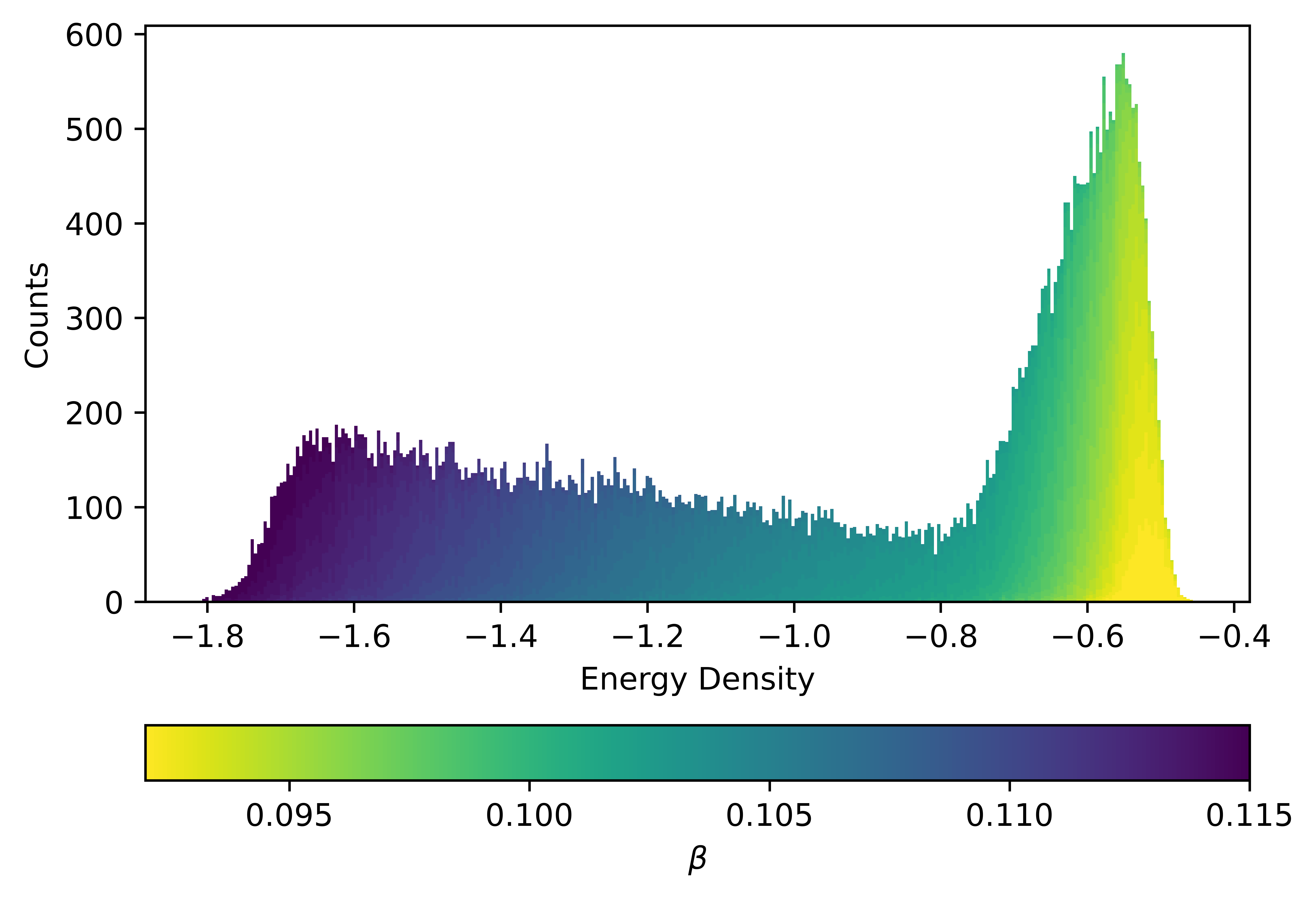}
        \caption{Empirical energy density ($U / V$) distribution for $K_1, \dots, K_6 = 1$. Color denotes the temperature at which the energy is recorded.}
        \label{fig:eng_hist}
    \end{subfigure}
    \hfill
    \begin{subfigure}[t]{0.49\textwidth}
        \centering
        \includesvg[width=\textwidth]{figs_150824_sw_coarse/multi_hist_log_ess.svg}
        \caption{Effective Sample Size. The white box separates the interpolation and extrapolation regions.}
        \label{fig:mh_ess}
    \end{subfigure}
    \caption{Empirical energy density and effective sample size used to verify that the multiple histogram method yields reliable estimates in the parameter space of interest.}
    \label{fig:mh_reliable}
\end{figure*}

After applying the MH method to the raw data presented in \cref{fig:prelim_results}, we arrive at the results in \cref{fig:mh_results}. As expected, the MH method is clearly able to interpolate the magnetization and directional energy variances observed in \cref{fig:prelim_results}, with extrapolation failing on the order of $\Delta \beta$ outside of the simulated region. Although less expensive than iterated Monte Carlo simulations, the MH method still comes at a significant computational cost, which does not scale well if we consider measuring the whole parameter space $\{\beta, \vec{K}\}$. Additionally, \cref{eq:delta_beta_req} implies that $\Delta \beta$ should be made exceptionally small near $\beta_c$ where $\Var(U)$ is expected to peak sharply, further raising the cost of the MH method. These computational limitations, as well as the MH method's inability to extrapolate in system volume, lead us to consider alternative methods of estimation, which brings us to Masked Autoregressive Flow.
\begin{figure*}
    \centering
    \begin{subfigure}[t]{0.32\textwidth}
        \centering
        \includesvg[width=\textwidth]{figs_150824_sw_coarse/multi_hist_k5_energy_var.svg}
        \caption{$\widehat{\Var}(E_5 / V)$}
        \label{fig:mh_eng_5_var}
    \end{subfigure}
    \hfill
    \begin{subfigure}[t]{0.32\textwidth}
        \centering
        \includesvg[width=\columnwidth]{figs_150824_sw_coarse/multi_hist_k6_energy_var.svg}
        \caption{$\widehat{\Var}(E_6 / V)$}
        \label{fig:mh_eng_6_var}
    \end{subfigure}
    \hfill
    \begin{subfigure}[t]{0.32\textwidth}
        \centering
        \includesvg[width=\columnwidth]{figs_150824_sw_coarse/multi_hist_magnetization_var.svg}
        \caption{$\widehat{\Var}(|m| / V)$}
        \label{fig:mh_mag_var}
    \end{subfigure}
    \caption{$\Var(E_5/V)$, $\Var(E_6/V)$, and $\Var(|m|/V)$ estimated with the MH method. These are the same observables from \cref{fig:prelim_results} interpolated and extrapolated to have five times the resolution in $\beta$ and $K_6$. The white boxes separate the interpolation and extrapolation regions.}
    \label{fig:mh_results}
\end{figure*}

\section{Masked Autoregressive Flow}

Masked Autoregressive Flow~\cite{papamakarios2018maskedautoregressiveflowdensity} (MAF) is a machine learning approach to the problem of estimating the underlying probability density function (PDF) of observables. The model parameterizes the target multi-dimensional PDF as a product of single-dimensional PDFs. This factorization is not unique, which is a freedom we can later leverage as a method of error quantification (see \cref{sec:maf_uncertainty}). For example, two ways the factorization can be performed are:
\begin{align}\label{eq:maf_single_pdfs}
    \begin{split}
        &\pdf(m, \vec{E})
        \\
        &= \pdf(m) \cdot \qty[\prod_{i=1}^6 \pdf(E_i|m, E_1, \dots, E_{i-1})]
        \\
        &=\qty[\prod_{i=1}^6\pdf(E_i|E_1, \dots, E_{i-1})] \cdot \pdf(m | \vec{E})
    \end{split}
\end{align}
(we have suppressed the conditioning on $\beta$ and $\vec{K}$ for clarity). The single-dimensional PDFs are then estimated using a series of layers with masked connections. The masks are chosen carefully so as to preserve the proper conditioning between the single-dimensional PDFs. See the original paper~\cite{papamakarios2018maskedautoregressiveflowdensity} for a detailed description of the network architecture. With the final PDF, all moments of the distribution can be computed by directly sampling from the learned PDF.

Karsch et al. \cite{karsch2022machinelearningapproachclassification} found that MAF overfits less than the MH method in many-flavor QCD simulations, suggesting that the machine learning approach may be an improvement on the traditional statistical method in our context as well. Furthermore, MAF can easily be extended to extrapolate observables in system volume, which is a key requirement in our work as mentioned previously. As a result, here, we have implemented MAF to learn the joint PDF of $m$ and $\vec{E}$ conditioned on $\beta$ and $\vec{K}$: $\pdf(m, \vec{E} | \beta, \vec{K})$.

To maximize portability and expandability, our implementation is built using the PyTorch Python library. We build on the implementation of Zhihan Yang~\cite{maf_pytorch_trunk}, adding the functionality to learn conditional PDFs. After training on the data summarized in \cref{fig:prelim_results}, we arrive at the results in \cref{fig:maf}. Over the course of our experimentation, we made several key modifications to the stock MAF implementation in order to adequately reproduce the results of \cref{fig:mh_results}. These are discussed in the following sections.

\begin{figure*}
    \centering
    \begin{subfigure}[t]{0.32\textwidth}
        \centering
        \includesvg[width=\textwidth]{maf/dist_221225_ar2_h2x100_bn0_e487_mh_eng_dir5.svg}
        \caption{$\widehat{\Var}(E_5/V)$}
        \label{fig:maf_eng_5}
    \end{subfigure}
    \hfill
    \begin{subfigure}[t]{0.32\textwidth}
        \centering
        \includesvg[width=\textwidth]{maf/dist_221225_ar2_h2x100_bn0_e487_mh_eng_dir6.svg}
        \caption{$\widehat{\Var}(E_6/V)$}
        \label{fig:maf_eng_6}
    \end{subfigure}
    \hfill
    \begin{subfigure}[t]{0.32\textwidth}
        \centering
        \includesvg[width=\textwidth]{maf/dist_221225_ar2_h2x100_bn0_e487_mh_mag.svg}
        \caption{$\widehat{\Var}(|m|/V)$}
        \label{fig:maf_mag}
    \end{subfigure}
    \caption{$\Var(E_5/V)$, $\Var(E_6/V)$, and $\Var(|m|/V)$ estimated with MAF. This implementation contained two autoregressive layers in series, each with two hidden layers, and each hidden layer having 100 nodes. No batch normalization layers are used. These plots have the same resolution as~\cref{fig:mh_results}. The white boxes separate the interpolation and extrapolation regions.}
    \label{fig:maf}
\end{figure*}

\subsection{Batch Normalization in MAF}

Batch normalization~\cite{ioffe2015batchnormalizationacceleratingdeep} is a tool widely used in neural networks that is generally known to improve runtime and numerical stability (although the reasons for this are ``still poorly understood"~\cite{santurkar2019doesbatchnormalizationhelp}). During training, a batch normalization layer will perform a linear transformation on a given batch so that the output has zero mean and unit variance: it subtracts off the mean and divides by the standard deviation (std.). During inference, the batch normalization parameters become fixed, although the choice of what to fix them to varies between implementations. Some examples include the mean and std. of the entire training set~\cite{papamakarios2018maskedautoregressiveflowdensity}, the mean and std. averaged over minibatches~\cite{ioffe2015batchnormalizationacceleratingdeep}, and a running mean and std. recorded during training~\cite{dinh2017densityestimationusingreal,pytorch_batch_norm}. However, for our implementation, any of these approaches would require averaging samples taken from different phases, which results in statistics that are representative of neither phase, not to mention physically irrelevant. For example, when transitioning from the paramagnetic phase to the ferromagnetic phase, $m$ transitions from a unimodal to a bimodal distribution. During batch normalization training, samples of $m$ from different configurations $\{\beta_j, \vec{K}_j\}$ are pooled together and averaged. Conversely, during inference, we sample $m$ from a single configuration $\{\beta', \vec{K}'\}$. In our experiments, this mismatch results in extremely poor model performance since the behavior of $m$ is highly sensitive to $\beta$, and cannot be normalized by the global, $\beta$-independent statistics of batch normalization.

To correct this issue, the plots in \cref{fig:maf} are made without any batch normalization layers between the autoregressive layers. This choice comes at a cost, however. As anticipated by~\cite{papamakarios2018maskedautoregressiveflowdensity}, the absence of batch normalization results in numerical instability in larger networks and slower training times. While our implementation may have succeeded with a relatively small network consisting of two autoregressive layers, a larger model will be necessary to estimate the PDF over the full space spanned by $\{\beta, \vec{K}\}$. ``Conditional batch normalization"~\cite{devries2017modulatingearlyvisualprocessing}, a technique developed for the task of visual processing, may provide one solution. Further experimentation will be needed to find out whether the technique can be readily applied to PDF estimation.

\subsection{Base Distribution Symmetrization}

The base distribution which MAF transforms into the target PDF is a product of independent single-dimensional PDFs. These base single-dimensional PDFs are made by summing $N_G$ Gaussian distributions with independent means and variances $\{\mu_\ell, \sigma^2_\ell\}$, $\ell \in \qty{1, \dots, N_G}$ that are learned during training. Each Gaussian is weighted by a corresponding ``mixing coefficient" $\omega_\ell$ with $\sum_{\ell=1}^{N_G} \omega_\ell = 1$, also learned during training. We fix $N_G = 2$ since the magnetization in the ferromagnetic phase is bimodal. The stock implementation of MAF leaves $\mu_{1,2}$, $\sigma^2_{1,2}$, and $\omega_{1, 2}$ as independent training parameters, but we apply extra restrictions that reflect the $\Z_2$ symmetry of the Ising action. In particular, we force $\sigma^2_1 = \sigma^2_2$ and $\omega_1 = \omega_2 = \frac{1}{2}$, which we found to stabilize training. We do not force $\mu_1 = -\mu_2$ since we found that this degraded the model's ability to pinpoint the peaks in observable variances. Although the target PDF for $m$ is bimodal and symmetric about $m=0$, the base distribution is abstracted behind several autoregressive layers, and need not be centered on the origin. We also symmetrize the data set used to train the MAF network, as described in \cref{sec:maf_uncertainty}.

\section{Comparing MH and MAF}\label{sec:fit_beta_c}

Now that each method has been introduced, we present our final results: comparing the performance of the MH and MAF methods in estimating $\beta_c$ as a function of $\vec{K}$. For a fixed value of $K_6$, we fit the variance of an observable with a `generalized Lorentzian':
\begin{equation}\label{eq:fit_beta}
    a_1 \qty(1 + \qty|\frac{\beta-\beta_c}{a_2}|^{a_3})^{-1}
\end{equation}
where $\beta_c$, $a_1$, $a_2$, and $a_3$ are all fit parameters. The model is only required to find the peak in variance, and hence is only deployed in a small window around the largest recorded variance. A more complex model to take into account possible asymmetry about the peak could be used to measure other interesting quantities such as critical exponents, but was not needed here.

Repeating the fit for all desired values of $K_6$, we arrive at the comparison in \cref{fig:comparison}.
\begin{figure*}
    \centering
    \begin{subfigure}[t]{0.49\textwidth}
        \centering
        \includesvg[width=\textwidth]{compare/dist_221225_ar2_h2x100_bn0_e487_mh_beta_c.svg}
        \caption{$\beta_c$ determined by fitting observables from \cref{fig:mh_results,fig:maf} with \cref{eq:fit_beta} for a given value of $K_6$.}
        \label{fig:compare_a}
    \end{subfigure}
    \hfill
    \begin{subfigure}[t]{0.49\textwidth}
        \centering
        \includesvg[width=\textwidth]{compare/dist_221225_ar2_h2x100_bn0_e487_mh_beta_c_sigma.svg}
        \caption{Residual plot between the MH and MAF methods, normalized by the fitting errors added in quadrature.}
        \label{fig:compare_b}
    \end{subfigure}
    \caption{Comparison of $\beta_c$ estimated with the MH and MAF methods. Error bars are derived from the fitting model (\cref{eq:fit_beta}) and are shown in (a), but are as narrow as the line widths. The grey lines separate the interpolation and extrapolation regions.}
    \label{fig:comparison}
\end{figure*}

The estimators are in near agreement with each other inside the interpolation region, and diverge outside the interpolation region on the scale of $\Delta \beta$. MAF is able to recover the estimate made by the MH method and, in our implementation, has a time complexity that scales much more slowly with a finer interpolation grid. For example, to increase the resolution of \cref{fig:mh_results} by another factor of five would require an additional pass over all $N$ samples for each of 25 times as many grid points (a factor of five each for $\beta$ and $K_6$). This scaling can be somewhat mitigated by only performing a refinement on pixels near $\beta_c$. However, given that a single pass is already an expensive computation requiring HPC resources, this reduced scaling is still undesirable. Meanwhile, for the same given set of Monte Carlo data, MAF only needs to be trained once, and can be queried for any interpolated $\beta$ without great additional cost.

\section{Uncertainty Quantification}

Any measurement of $\beta_c$ must also come with an uncertainty, which is itself a nuanced quantity dependent on our choice of implementation. Here we discuss the main sources of uncertainty associated with each method.

\subsection{MH Uncertainties}\label{sec:mh_err}

The largest source of systematic uncertainty in the MH method is associated with the `reweighting factors': $\hat{p}_{j, s}$ and $\hat{q}_{j, s}$ (defined in \cref{eq:reweighting_factors}). The Boltzmann factors are exact, so it is specifically our estimate of $\widehat{Z}$ that carries any potential source of bias. As described in~\cite{newman}, $\widehat{Z}$ is derived from an iterative and convergent loop, meaning its error can be made arbitrarily small compared to statistical uncertainties. Furthermore, as discussed in \cref{sec:mh_estimators}, the MH method can be viewed as ``self-normalized importance sampling". Any bias in $\widehat{\obs}$ derived from $\widehat{Z}$ is known to vanish asymptotically in the infinite sample limit~\cite{mcbook}.

Statistical uncertainty in the MH method for averages can be estimated directly via \cref{eq:mh_sem_est}. Of course, we are more interested in the uncertainties on variances, which do not have a similarly simple estimator. Instead, we may consider jackknife methods, or if we only concern ourselves with $\beta_c$, the uncertainty derived from our fitting model, \cref{eq:fit_beta}. It is also worth noting that at our lattice volume and number of samples, finite-size effects are comparable to statistical uncertainty. As seen in \cref{fig:compare_a}, there is a significant difference between $\beta_c$ measured with $\widehat{\Var}(|m|)$ and $\widehat{\Var}(E_{5,6})$. As a result, if our goal is to measure $\beta_c$ at infinite volume, statistical uncertainty is not the limiting factor.

One final remark regarding uncertainty quantification in the MH method is that we have a convenient `check' for determining how far we can reliably extrapolate. Consider using the MH method to extrapolate in $K_5$ while fixing all $K_{i \neq 5} = 1$. Due to the symmetry of the FCC lattice, this extrapolation should result in an estimate of $\beta_c$ in agreement with the interpolated estimate using $K_6$. This can be seen in \cref{fig:k5_extrap}, where the two estimates are in near agreement close to $K_1, \dots, K_6 = 1$, and begin to diverge at the expected scale of $\Delta \beta$.
\begin{figure}
    \centering
    \includesvg[width=\linewidth]{figs_150824_sw_coarse/compare_K5.svg}
    \caption{Residual plot between the MH method interpolating $\beta_c$ along $K_6$, and the MH method extrapolating $\beta_c$ along $K_5$, normalized by the fitting errors added in quadrature.}
    \label{fig:k5_extrap}
\end{figure}

\subsection{MAF Uncertainties}\label{sec:maf_uncertainty}

With MAF, systematic uncertainty can be quantified by varying the training process and measuring the spread in final results. For example, we observed that the batching and ordering of the training data can have a significant impact on model performance, especially when batch normalization is included. Additionally, the estimated PDF can be factorized using any permutation of the observables. \cref{eq:maf_single_pdfs} gives two examples. The different factorizations of $\pdf(m, \vec{E})$ correspond to different network architectures. In principle, these architectures should behave similarly, but as noted by Papamakarios et al., ``a drawback of autoregressive models is that they are sensitive to the order of the variables"~\cite{papamakarios2018maskedautoregressiveflowdensity}. In future work, we can train the model several times using different permutations and leverage this sensitivity to estimate the systematic uncertainty in MAF.

Uncertainties derived from training in MAF can be handled using standard practices in machine learning, in particular by reserving a fraction of the original Monte Carlo data for testing to prevent overfitting. In our experiments, of the 1000 samples that went into each pixel of \cref{fig:prelim_results}, 500 went to training, and 500 went to testing. The training data set is also symmetrized. For each sample $\{m, \vec{E}\}$, we augmented the training set with $\{-m, \vec{E}\}$ in order to reduce uncertainty and enforce the $\Z_2$ symmetry in the Ising action.

\section{Conclusions}

Constructing a CFT on an arbitrarily curved manifold via the affine conjecture requires knowing the critical temperature $\beta_c$ as a continuous function of the coupling coefficients $\{K_i\}$ in the infinite volume limit, where the $\{K_i\}$ could deviate from unity by as much as 50\%. Then, at criticality, the ratios of distances can be determined from correlation functions, which can be tuned to match the target geometry. In principle, this is an extremely computationally intensive task. The central goal of this work was to identify a method capable of interpolating and extrapolating Monte Carlo observables both in couplings \textbf{and} in the volume that is both sufficiently accurate and computationally efficient to map out $\beta_c$ across the parameter space. The MH method would be sufficient to map the parameter space of couplings, but only at a fixed volume. The masked autoregressive flow (MAF) could be used to interpolate and extrapolate in the volume if it could first be verified that MAF works as well as MH on a fixed volume. 

Our main result is this verification. For most of the interpolation region, MAF reproduces the $\beta_c$ estimates of the MH method within $3\sigma$ (\cref{fig:comparison}), and does so with a computational cost that grows more slowly with more data or higher target precision. This agreement validates MAF as an effective interpolation tool in a setting where the MH method already serves as a trusted benchmark.

Establishing this agreement is the key step that justifies our future use of MAF in a regime where the MH method cannot follow. Unlike the MH method, MAF can interpolate between datasets computed at different lattice volumes, which is precisely what is needed to extrapolate observables to the infinite-volume limit. Since the MH method offers no analogous capability, the demonstrated agreement of the two methods at fixed volume gives us confidence to deploy MAF for the volume extrapolation required to take the continuum limit. Pursuing this extrapolation, together with the implementation challenges discussed above, most notably adapting batch normalization to our setting, constitutes the natural next stage of this program, bringing us closer to the broader goal of studying field theories on curved manifolds.

\begin{acknowledgments}
    Kai Svenson thanks Maria Spiropulu, Joseph Lykken, Mark Wise, and Dan Hackett for their mentorship and valuable conversations throughout this project. Kai Svenson also thanks the Caltech SURF program for helping sponsor this project.
    N.M. is supported by the Scientific Discovery through Advanced Computing (SciDAC) program under FOA LAB-2580, funded by the Department of Energy, Office of Science.
    This work is supported by the U.S. Department of Energy, Office of Science, Office of High Energy Physics, under award numbers DE-SC0011925 and DE-SC0019219.
    This document is prepared by the Quantum Finite Elements (QFE) collaboration using the resources of the Fermi National Accelerator Laboratory (Fermilab), a U.S. Department of Energy, Office of Science, Office of High Energy Physics HEP User Facility. Fermilab is managed by Fermi Forward Discovery Group, LLC, acting under Contract No. 89243024CSC000002.
\end{acknowledgments}

\bibliography{main}

\appendix

\section{Multiple Histogram Method Estimators}\label{sec:mh_estimators}

A complete and independent derivation of the estimator \cref{eq:mh_est} can be found in~\cite{newman}. Naturally, we would also like an estimator for the standard error of $\widehat{\obs}$ and $\Var(\obs)$. Although it is not directly presented as such in~\cite{newman}, the MH method is a type of importance sampling method, for which there already exist many sources~\cite{mcbook,GATZ19951185,Newman1999_error} with results that we can apply to our analysis.

To make the correspondence between the MH method and importance sampling, it will first be beneficial to consider a simplified case. Suppose we are interested in estimating some statistics of a random variable $X$ with PDF $p: \R \to \R$. However, suppose the only data given to us is a collection of i.i.d. samples $y_1, \dots, y_N \in \R$ from a different random variable $Y$ with PDF $q: \R \to \R$. Since we are dealing with two random variables, we have two notions of an expectation value. Given an observable $g: \R^N \to \R$ made from $N$ i.i.d. samples, we define:
\begin{subequations}
    \begin{align}
        \expval{g}_p &= \int_{\R^N} g(x_1, \dots, x_N) \prod_{i=1}^N p(x_i)\dd{x_i}
        \\
        \expval{g}_q &= \int_{\R^N} g(x_1, \dots, x_N) \prod_{i=1}^N q(x_i)\dd{x_i}
    \end{align}
\end{subequations}
Similar definitions follow for $\Var_{p, q}(g) = \expval{g^2}_{p, q} - \expval{g}_{p, q}^2$. Next, suppose we are interested in estimating the ``reweighted mean" $\expval{f}_p$ for some arbitrary observable $f: \R \to \R$. The most straightforward method of doing this is with the following estimator~\cite{mcbook}:
\begin{align}\label{eq:reweighted_mean}
    \widehat{f} &= \frac{1}{N} \sum_{i=1}^N f(y_i) w_i
\end{align}
with $w_i = p(y_i)/q(y_i)$. It is easy to demonstrate that $\widehat{f}$ is unbiased (\cref{prop:fhat_unbiased}) if $p$ and $q$ are known exactly. Of course, this is unrealistic in an experimental setting, but if we are able to obtain estimates of $p$ and $q$, then those can be substituted in our expression for $\widehat{f}$.

Next, we derive unbiased estimators for the squared standard error of $\widehat{f}$  and $\Var_p(f)$. Respectively:
\begin{align}
    \widehat{\Var}_q(\widehat{f}) &= \frac{1}{N-1}\qty(\frac{1}{N}\qty[\sum_{i=1}^N w_i^2 f(y_i)^2] - (\widehat{f})^2)\label{eq:unbiased_sem}
    \\
    \widehat{\Var}_p(f) &= \widehat{f^2} - (\widehat{f})^2 + \widehat{\Var}_q(\widehat{f})\label{eq:unbiased_var}
\end{align}
Derivations can be found in \cref{prop:unbiased_sem,prop:var_est}.

Finally, we apply this reasoning to the MH method. In this case, our given random sample is the entire set of $N$ samples $\{\obs_{j, s}\}$ with corresponding energies $\{\vec{E}_{j,s}\}$. This random sample is analogous to $y_1, \dots, y_N$. Consider any element of our sample set $\obs_{j, s}$, observed with energy $\vec{E}_{j, s}$. Let $\Omega_{j, s}$ be the number of microstates such that $\obs_{j, s}$ and $\vec{E}_{j, s}$ are simultaneously observed. Then, the probability $p_{j, s}$ that we could have observed $\obs_{j, s}$ and $\vec{E}_{j, s}$ at a configuration $\{\beta', \vec{K}'\}$ is:
\begin{align}
    p_{j,s} &= \Omega_{j, s} \cdot \frac{\exp(-\beta' \vec{K}' \cdot \vec{E}_{j, s})}{Z(\beta', \vec{K}')}
\end{align}
$p_{j,s}$ is analogous to $p(y_j)$, and is best estimated with $\Omega_{j, s} \widehat{p}_{j, s}$ as defined in \cref{eq:pjs}. We do not normally know $\Omega_{j, s}$, but as we shall soon see, this will not matter.

Next, notice that the probability $q_{j,s}$ that we actually observed $\obs_{j, s}$ and $\vec{E}_{j, s}$ simultaneously is:
\begin{align}
    q_{j,s} &= \sum_{k=1}^m \frac{n_k}{N} \cdot \Omega_{j, s} \cdot \frac{\exp(-\beta_k \vec{K}_k \cdot \vec{E}_{j, s})}{Z(\beta_k, \vec{K}_k)}
\end{align}
To break this down:
\begin{itemize}
    \item The probability that any observation $\obs$ came from the $k$th simulation is $n_k / N$. Notice, in order to apply the results of importance sampling to the MH method, we must treat all samples from all simulations as if they came from the same underlying distribution. Out of all $N$ samples, $n_k$ of them came from the $k$th simulation, so any randomly picked observation $\obs$ comes from simulation $k$ with probability $n_k/N$.
    
    \item The probability that we could have observed $\obs_{j, s}$ and $\vec{E}_{j, s}$, supposing that they were observed in the $k$th simulation, is $\Omega_{j, s} \exponential(-\beta_k \vec{K}_k \cdot \vec{E}_{j,s}) Z(\beta_k, \vec{K}_k)^{-1}$.
    
    \item The product of $\Omega_{j, s} \exponential(-\beta_k \vec{K}_k \cdot \vec{E}_{j,s}) Z(\beta_k, \vec{K}_k)^{-1}$ and $n_k / N$ is the probability that we could have measured $\obs_{j, s}$ in the $k$th simulation.
    
    \item By summing over all simulations, we get the total probability of observing $\obs_{j,s}$.
\end{itemize}
$q_{j,s}$ is analogous to $q(y_j)$, and is best estimated with $\Omega_{j, s} \widehat{q}_{j, s}$ as defined in \cref{eq:qjs}. Since we only need the ratio $p_{j,s} / q_{j,s}$, we do not need to estimate $\Omega_{j, s}$. Now, we may directly apply our estimators to the MH method. \Cref{eq:reweighted_mean} translates to \cref{eq:mh_est}. \Cref{eq:unbiased_sem} translates to:
\begin{align}\label{eq:mh_sem_est}
    \widehat{\Var}_\text{stat.}(\widehat{\obs}) &=
    \begin{aligned}[t]
        &
        \frac{1}{N(N-1)}\sum_{j, s} \qty(\frac{\widehat{p}_{j,s}}{\widehat{q}_{j,s}})^2 \obs_{j,s}^2
        \\
        -&
        \frac{1}{N-1} (\widehat{\obs})^2
    \end{aligned}
\end{align}
The `stat.' subscript emphasizes that this is not a thermal variance, but the squared standard error of $\widehat{\obs}(\beta)$. \Cref{eq:unbiased_var} translates to:
\begin{align}\label{eq:mh_var_est_prime}
    \widehat{\Var}'(\obs) &= \widehat{\obs^2} - (\widehat{\obs})^2 + \widehat{\Var}_\text{stat.}(\widehat{\obs})
\end{align}
The prime is to differentiate this estimator from \cref{eq:mh_var_est} (discussion below). We must emphasize that in taking the ratio $\widehat{p}_{j,s} / \widehat{q}_{j,s}$, we have introduced a bias since $\widehat{p}_{j,s}$ and $\widehat{q}_{j,s}$ are not independent. However, this bias is expected to vanish in the infinite sample limit~\cite{mcbook}.

Notably, \cref{eq:mh_var_est_prime} is not the estimator we have used in the main analysis. Against expectations, \cref{eq:mh_var_est_prime} did not yield any identifiable critical boundary. We suspect that although \cref{eq:mh_var_est_prime} may have reduced bias, it may also have a relatively larger variance through the factors of $\qty(\widehat{p}_{j,s} / \widehat{q}_{j,s})^2$. The estimator in \cref{eq:mh_var_est} is chosen to match the standard sample variance in the case of unweighted data. Significant discrepancies between reweighted estimators have also been observed by Gatz and Smith in~\cite{GATZ19951185}. Their comparison among three distinct estimators for the standard error of the reweighted mean revealed that there is no universally accepted `best' estimator for $\Var_\text{stat.}(\widehat{\obs}(\beta))$ (and $\Var(\obs(\beta))$ by extension). Explaining the discrepancy between all of the estimators in this section and in~\cite{GATZ19951185} will be a task for future work.

\begin{widetext}
\section{Derivations}\label{ap:derivations}

\begin{proposition}\label{prop:fhat_unbiased}
    \Cref{eq:reweighted_mean} is an unbiased estimator of $\expval*{f}_p$.
\end{proposition}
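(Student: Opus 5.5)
The plan is a direct computation of $\expval*{\widehat{f}}_q$, using linearity of expectation together with the definition of $\expval{\cdot}_q$ as an integral against the product density $\prod_{i=1}^N q(x_i)$. The estimator in \cref{eq:reweighted_mean} is a function $g(y_1,\dots,y_N)=\frac{1}{N}\sum_{i=1}^N f(y_i)\,p(y_i)/q(y_i)$ of the $N$ i.i.d.\ samples drawn from $q$. I first split it into its $N$ summands and take the expectation of each one separately.

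For a fixed index $i$, the summand depends only on $x_i$. In the $N$-fold integral, every other variable $x_k$ with $k\neq i$ therefore integrates against $q(x_k)$ alone, contributing a factor $\int q(x_k)\dd{x_k}=1$ by normalization of $q$. This leaves the single integral
\begin{align*}
\int_{\R} f(x)\,\frac{p(x)}{q(x)}\,q(x)\dd{x}=\int_{\R} f(x)\,p(x)\dd{x}=\expval{f}_p ,
\end{align*}
where the last equality holds because $\expval{f}_p$ for a single-sample observable is exactly this integral, by the definition with $N=1$. Summing the $N$ identical contributions and dividing by $N$ gives $\expval*{\widehat{f}}_q=\expval{f}_p$, which is the unbiasedness claimed.

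The only step needing care, and so the main obstacle, is the cancellation $\frac{p(x)}{q(x)}q(x)=p(x)$. This requires that $q(x)>0$ wherever $p(x)f(x)\neq0$, i.e.\ the support condition. It also requires that $\expval{\abs{f}}_p<\infty$, so that the integrals exist and linearity and Fubini apply. I would state these assumptions explicitly. They are implicit in the paper's setting, since $w_i$ must be well defined at every sampled point. On the set where $q=0$ the samples almost surely never land, so the integrand there can be defined arbitrarily without affecting the result. Once these conditions are recorded, the rest is routine.
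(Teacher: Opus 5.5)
Your proof is correct and follows essentially the same route as the paper: it expands $\expval{\widehat{f}}_q$ as $\frac{1}{N}\sum_i$ of an $N$-fold integral against $\prod_j q$, integrates out the other variables, and cancels $q$ against the weight $p/q$ to obtain $\expval{f}_p$. Your explicit support condition ($q>0$ wherever $pf\neq 0$) and integrability assumption are left implicit in the paper and are worth keeping.
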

\begin{proof}
    \begin{subequations}
        \begin{align}
            \expval{\widehat{f}}_q &= \frac{1}{N} \sum_{i=1}^N \int_{\R^N}  f(y_i) \frac{p(y_i)}{q(y_i)} \prod_{j=1}^N q(y_j)\dd{y_j}
            \\
            &= \frac{1}{N} \sum_{i=1}^N \int_{\R}  f(y_i) p(y_i) \dd{y_i}
            \\
            &= \expval{f}_p
        \end{align}
    \end{subequations}
\end{proof}

\begin{proposition}\label{prop:unbiased_sem}
    \Cref{eq:unbiased_sem} is an unbiased estimator of $\Var_q(\widehat{f})$.
\end{proposition}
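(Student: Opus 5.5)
We will show $\expval{\widehat{\Var}_q(\widehat{f})}_q = \Var_q(\widehat{f})$ by computing both sides in terms of the two single-sample moments $\mu_1 = \expval{f w}_{q}$ and $\mu_2 = \expval{f^2 w^2}_{q}$. Here $w = p(y)/q(y)$, and both expectations are over one draw $y \sim q$. By the computation in \cref{prop:fhat_unbiased}, $\mu_1 = \expval{f}_p$. Since the summands $Z_i = f(y_i) w_i$ are i.i.d.\ under $q$, the target is
\begin{align}
    \Var_q(\widehat{f}) = \frac{1}{N}\qty(\mu_2 - \mu_1^2),
\end{align}
because the variance of a mean of $N$ i.i.d.\ variables is the single-variable variance divided by $N$.

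We then take the expectation of the proposed estimator \cref{eq:unbiased_sem} term by term.
\begin{itemize}
    \item \textbf{First term.} By linearity and identical distribution, $\expval{\frac{1}{N}\sum_i w_i^2 f(y_i)^2}_q = \mu_2$.
    \item \textbf{Second term.} Expand $(\widehat{f})^2 = \frac{1}{N^2}\sum_{i,k} Z_i Z_k$ and split into diagonal and off-diagonal parts. The $N$ diagonal terms each contribute $\mu_2$. The $N(N-1)$ off-diagonal terms each factor as $\mu_1^2$ by independence of $y_i$ and $y_k$. Hence $\expval{(\widehat{f})^2}_q = \frac{1}{N}\mu_2 + \frac{N-1}{N}\mu_1^2$.
\end{itemize}
Subtracting gives $\frac{N-1}{N}(\mu_2 - \mu_1^2)$. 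Multiplying by the prefactor $\frac{1}{N-1}$ yields $\frac{1}{N}(\mu_2 - \mu_1^2)$, which matches the target.

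The only delicate step is the off-diagonal factorization. It uses independence of distinct samples under the product measure in the definition of $\expval{\cdot}_q$, which is the step to state carefully. We also assume $\mu_2 < \infty$, i.e.\ $f^2 p^2/q$ is integrable, so that all quantities are finite; we will state this explicitly as a hypothesis. As in \cref{prop:fhat_unbiased}, $p$ and $q$ are taken to be known exactly.
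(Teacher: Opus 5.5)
Your proof is correct and follows essentially the same route as the paper: both compute $\Var_q(\widehat{f}) = \frac{1}{N}(\mu_2-\mu_1^2)$ and $\expval{(\widehat{f})^2}_q$ by splitting the double sum into diagonal and off-diagonal terms, and your $\mu_2 = \expval{w^2 f^2}_q$ is exactly the paper's $\expval{wf^2}_p$. The only difference is cosmetic: the paper substitutes back through its helper identities rather than expanding directly, and your explicit finiteness hypothesis on $\mu_2$ is a worthwhile addition that the paper leaves implicit.
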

\begin{proof}
    First, we write $\Var_q(\widehat{f})$ in a convenient form.
    \begin{subequations}
        \begin{align}
            \Var_q(\widehat{f}) &= \expval{(\widehat{f})^2}_q - \expval{\widehat{f}}^2_q\label{eq:unbiased_sem_helper1}
            \\
            &= \frac{1}{N^2} \sum_{i, j=1}^N \int_{\R^N} f(y_i)f(y_j) \frac{p(y_i)p(y_j)}{q(y_i)q(y_j)} \prod_{k=1}^N q(y_k) \dd{y_k} - \expval{f}^2_p
            \intertext{Let $w = p/q$. We split the summation into two cases: $i=j$ and $i\neq j$.}
            &= \frac{1}{N^2}\qty(N\expval{wf^2}_p + N(N-1)\expval{f}_p^2) - \expval{f}^2_p
            \\
            &= \frac{1}{N}\qty(\expval{wf^2}_p - \expval{f}_p^2) \label{eq:unbiased_sem_helper2}
        \end{align}
    \end{subequations}
    Now we show that the expected value of \cref{eq:unbiased_sem} is equivalent to $\Var_q(\widehat{f})$.
    \begin{subequations}
        \begin{align}
           \expval{\widehat{\Var}_q(\widehat{f})}_q &= \expval{\frac{1}{N-1}\qty(\frac{1}{N}\qty[\sum_{i=1}^N w_i^2 f(y_i)^2] - (\widehat{f})^2)}_q
           \\
           &= \frac{1}{N-1} \qty(\expval{wf^2}_p - \expval{(\widehat{f})^2}_q)
           \intertext{Using \cref{eq:unbiased_sem_helper1,eq:unbiased_sem_helper2}, we have:}
           &= \frac{1}{N-1} \qty(\qty[N \Var_q(\widehat{f}) + \expval{f}_p^2] - \qty[\Var_q(\widehat{f}) + \expval{\widehat{f}}^2_q])
           \\
           &= \Var_q(\widehat{f})
        \end{align}
    \end{subequations}
\end{proof}

\begin{proposition}\label{prop:var_est}
    \Cref{eq:unbiased_var} is an unbiased estimator of $\Var_p(f)$.
\end{proposition}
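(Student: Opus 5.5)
We show that $\expval{\widehat{\Var}_p(f)}_q = \expval{f^2}_p - \expval{f}_p^2$ by taking the $q$-expectation of each of the three terms in \cref{eq:unbiased_var} separately and using linearity. Everything needed is already contained in \cref{prop:fhat_unbiased} and \cref{prop:unbiased_sem}, together with the intermediate identities \cref{eq:unbiased_sem_helper1,eq:unbiased_sem_helper2}.

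First, the term $\widehat{f^2}$. This is \cref{eq:reweighted_mean} with the observable $f$ replaced by $f^2$. Applying \cref{prop:fhat_unbiased} to $f^2$ gives $\expval{\widehat{f^2}}_q = \expval{f^2}_p$ immediately.

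Second, the term $(\widehat{f})^2$. By the definition of variance in \cref{eq:unbiased_sem_helper1}, $\expval{(\widehat{f})^2}_q = \Var_q(\widehat{f}) + \expval{\widehat{f}}_q^2$. By \cref{prop:fhat_unbiased}, $\expval{\widehat{f}}_q = \expval{f}_p$, so $\expval{(\widehat{f})^2}_q = \Var_q(\widehat{f}) + \expval{f}_p^2$.

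Third, the term $\widehat{\Var}_q(\widehat{f})$. By \cref{prop:unbiased_sem}, its $q$-expectation is exactly $\Var_q(\widehat{f})$.

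Combining the three pieces,
\begin{align}
\expval{\widehat{\Var}_p(f)}_q = \expval{f^2}_p - \Var_q(\widehat{f}) - \expval{f}_p^2 + \Var_q(\widehat{f}) = \Var_p(f).
\end{align}
The $\Var_q(\widehat{f})$ contributions cancel, which is precisely the reason the correction term is added in \cref{eq:unbiased_var}.

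No step is a genuine obstacle. The only point needing care is the second one: $\expval{(\widehat{f})^2}_q$ is not $\expval{f}_p^2$, because of the finite-sample variance of $\widehat{f}$, and this is what the correction term compensates for. We also need the standing assumptions under which \cref{prop:fhat_unbiased,prop:unbiased_sem} hold: $p$ and $q$ are known exactly, the samples are i.i.d., and $\expval{wf^2}_p$ is finite.
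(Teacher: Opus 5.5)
Your proof is correct and follows the paper's own argument. The paper likewise takes the $q$-expectation of the three terms, rewrites $\expval{(\widehat{f})^2}_q = \Var_q(\widehat{f}) + \expval{\widehat{f}}_q^2$, and uses the unbiasedness of $\widehat{f}$ and $\widehat{\Var}_q(\widehat{f})$ so that the $\Var_q(\widehat{f})$ terms cancel. The one difference is that you spell out the step $\expval{\widehat{f^2}}_q = \expval{f^2}_p$, which the paper leaves implicit.
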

\begin{proof}
    Using results from \cref{prop:unbiased_sem}, we have:
    \begin{subequations}
        \begin{align}
            \expval{\widehat{f^2} - (\widehat{f})^2 + \widehat{\Var}_q(\widehat{f})}_q &= \expval{f^2}_p - \qty(\Var_q(\widehat{f}) + \expval{\widehat{f}}^2_q) + \Var_q(\widehat{f})
            \\
            &= \expval{f^2}_p - \expval{f}^2_p
            \\
            &= \Var_p(f)
        \end{align}
    \end{subequations}
\end{proof}
\end{widetext}

\end{document}

%% file: fcc.tex
\tdplotsetmaincoords{70}{30}

\begin{tikzpicture}[tdplot_main_coords, scale=3.2]

    \tikzset{
        site/.style  = {circle, fill=black, inner sep=0pt, minimum size=5pt},
        cub/.style    = {line width=0.75pt, gray!65},             
        cubh/.style   = {line width=0.60pt, gray!38, dashed},     
        k1/.style  = {line width=2.0pt, Blue},
        k2/.style  = {line width=2.0pt, Green},
        k3/.style  = {line width=2.0pt, Red},
        k4/.style  = {line width=2.0pt, Aquamarine},
        k5/.style  = {line width=2.0pt, Orange},
        k6/.style  = {line width=2.0pt, Purple},
    }
    
    \draw[cubh] (0,0,0) -- (0,1,0);
    \draw[cubh] (1,1,0) -- (0,1,0);
    \draw[cubh] (0,1,1) -- (0,1,0);
    
    \draw[cub] (0,0,0) -- (1,0,0) -- (1,1,0);                      
    \draw[cub] (0,0,0) -- (0,0,1);                                 
    \draw[cub] (1,0,0) -- (1,0,1);                                 
    \draw[cub] (1,1,0) -- (1,1,1);                                 
    \draw[cub] (0,0,1) -- (1,0,1) -- (1,1,1) -- (0,1,1) -- cycle;  
    
    %
    %
    \draw[k1] (0,0,0) -- (1,0,1)
        node[pos=0.3, font=\footnotesize, text=Blue, left=2pt] {$K_1$};
    
    \draw[k2] (1,0,0) -- (0,0,1)
        node[pos=0.7, font=\footnotesize, text=Green, left=1pt] {$K_2$};
    
    %
    \draw[k3] (0,0,1) -- (1,1,1)
        node[pos=0.3, font=\footnotesize, text=Red, above left=-2pt] {$K_3$};
    
    \draw[k4] (1,0,1) -- (0,1,1)
        node[pos=0.7, font=\footnotesize, text=Aquamarine, above right=-3pt] {$K_4$};
    
    %
    \draw[k5] (1,0,0) -- (1,1,1)
        node[pos=0.7, font=\footnotesize, text=Orange, right=-2pt] {$K_5$};
    
    \draw[k6] (1,1,0) -- (1,0,1)
        node[pos=0.4, font=\footnotesize, text=Purple, right=-1pt] {$K_6$};
    
    \node[site] at (0,0,0) {};
    \node[site] at (1,0,0) {};
    \node[site] at (1,1,0) {};
    \node[site] at (0,0,1) {};
    \node[site] at (1,0,1) {};
    \node[site] at (0,1,1) {};
    \node[site] at (1,1,1) {};
    
    \node[site] at (0.5,0.5,1) {};   
    \node[site] at (0.5,0,0.5) {};   
    \node[site] at (1,0.5,0.5) {};   
    
\end{tikzpicture}